\documentclass[conference]{IEEEtran}

\usepackage[letterpaper, left=0.680in, right=0.680in, bottom=0.990in, top=0.71in]{geometry}
\usepackage{ifpdf}
\ifCLASSINFOpdf
\usepackage[pdftex]{graphicx}
\graphicspath{{./figures/}}
\DeclareGraphicsExtensions{.pdf,.jpeg,.png}
\else
\usepackage[dvips]{graphicx}
\DeclareGraphicsExtensions{.pdf}
\usepackage[caption=false, font=footnotesize]{subfig}
\fi

\usepackage{amsthm}
\usepackage{balance}
\usepackage{epstopdf}
\usepackage{amsmath}
\usepackage{amssymb}
\usepackage{color}
\usepackage{array}
\usepackage{mathtools}
\usepackage{enumitem}
\usepackage[mathcal]{euscript}
\usepackage{tabularx}
\usepackage{multirow}
\usepackage{booktabs}
\usepackage{makecell}
\usepackage{aligned-overset}
\usepackage{subfig}
\usepackage{url}
\usepackage{tikz}
\usetikzlibrary{arrows.meta,calc,positioning,angles,quotes,backgrounds,shapes.geometric}

\definecolor{tndblue}{HTML}{0018A8}\definecolor{tndpurple}{HTML}{6B3FA0}
\definecolor{tndgreen}{HTML}{086C46}\definecolor{tndgray}{HTML}{6E6E6E}
\definecolor{tndred}{HTML}{ED1B24}\definecolor{tndcyan}{HTML}{00A9CE}
\definecolor{tndgold}{HTML}{D4AA00}\definecolor{tndorange}{HTML}{FF9800}

\newtheorem{lemma}{Lemma}

\newtheorem{definition}{Definition}

\newtheorem{proposition}{Proposition}

\newtheorem{remark}{Remark}

\begin{document}

\bstctlcite{IEEEtran:BSTcontrol}

\title{CSI-Aging-Aware Functional Split Optimization \\ and KAN-LSTM Prediction for Open-RAN NTN}

\author{
  \IEEEauthorblockA{Alexandar~Tsvetanov~Dimitrov, Yosefine Triwidyastuti, Eray Guven, \\ Tri~Nhu~Do, and Gunes~Karabulut~Kurt}
  \IEEEauthorblockA{Department of Electrical Engineering, Polytechnique Montréal, Montreal, QC, Canada}
  \IEEEauthorblockA{\{alexandar.dimitrov, yosefine-2.triwidyastuti, guven.eray, tri-nhu.do, gunes.kurt\}@polymtl.ca}
}

\maketitle

\begin{abstract}
In this paper, we study functional-split selection for an Open radio access network (RAN) deployed as a non-terrestrial network (NTN), in which a low-Earth-orbit (LEO) satellite hosts the radio unit, so that each candidate split (7.2x, 6, 2) places the RAN at a distinct point of the transparent-to-regenerative axis. We capture the space--ground geometry exactly through an instantaneous link plane, which yields closed-form slant range, elevation, delay, and Doppler for an inclined orbit and a rotating Earth. On this geometry we derive the received-SNR density of the elevation-driven LOS/NLOS mixture channel and, through a split-dependent channel state information (CSI) aging mechanism, an uncoded bit-error-rate (BER) law for each split, establishing that what separates the splits is the freshness of the CSI each brings to its adaptation loops: a loop closed on the ground pays the feeder round-trip and incurs an aging-induced error floor. We formulate split selection as a prefer-centralized feasibility problem over reliability, aging, and fronthaul-capacity constraints, and we show that its exact solution is a closed-form region rule in four state coordinates, validated against exhaustive enumeration without a single mismatch. Because this solution is reactive whereas a split change must be initiated ahead of time, we propose a compact Kolmogorov--Arnold network (KAN)-LSTM, sized for an edge xApp/rApp, that predicts the split schedule seconds ahead; we show that it attains high schedule accuracy at a sub-percent BER-violation probability and delivers nearly all of the ground-truth policy's goodput, outperforming naive and reactive baselines.
\end{abstract}

\begin{IEEEkeywords}
Open RAN, non-terrestrial networks, LEO satellite, functional split, KAN, CSI aging
\end{IEEEkeywords}

\section{Introduction}
\label{sec:intro}
Non-terrestrial networks (NTN) built on low-Earth-orbit (LEO) satellites are a central ingredient of 6G coverage extension \cite{Jamshed_MCOM_2026}, and the Open radio access network (RAN) architecture, i.e., O-RAN, with its disaggregated O-RU/O-DU/O-CU functions, open fronthaul, and RAN Intelligent Controllers (RIC), is the natural framework in which to integrate them with the terrestrial segment \cite{Polese_COMST_2023}. Each standardized split (7.2x, 6, or 2) \cite{Alam_COMST_2026}, whose boundaries the emerging 3GPP 6G RAN architecture is refining further \cite{Lin2026_arXiv}, places the configuration at a distinct point of the \emph{transparent-to-regenerative} (bent-pipe-to-onboard) axis: split 7.2x keeps almost all baseband processing on the ground, split 6 moves the physical layer onboard, and split 2 relocates the entire distributed unit onto the payload.

In terrestrial deployments the split is largely a static dimensioning choice \cite{Alam_COMST_2026}; over a LEO pass it cannot be. Slant range, propagation delay, Doppler, and fading vary continuously, and, decisively, so does the freshness of the channel state information (CSI) each split brings to its physical- and MAC-layer adaptation loops. A transparent, ground-processed split closes those loops through the feeder, so its CSI is milliseconds old and, under fast Doppler, induces a bit-error-rate (BER) floor that no transmit power can remove; a fully regenerative split avoids this but forfeits the pooling, upgradeability, and low payload burden that motivate centralization.

Adaptive split selection has been pursued along two lines: rule-based optimization and learning-based methods. The former casts the split as a constrained orchestration problem, minimizing deployment cost and power under compute, link-capacity, and QoS constraints \cite{Laouadi_ICC_2025}; the latter, chiefly deep reinforcement learning, optimizes split or DU placement under limited compute and energy resources \cite{Murti_TWC_2022}. In NTN, whose terrestrial integration is still consolidating through standardization and experimental testbeds \cite{Carbonara_OJCOMS_2025}, the few dynamic treatments place security functions across the constellation under compute and visibility constraints \cite{Mahyoub_IWCMC_2025}, while AI-native Open RAN blueprints forecast link key performance indicators (KPIs) but leave the split itself fixed \cite{Do_RIVF_2025}. Two gaps remain: none couples the decision to the physical mechanism that actually separates the splits over a satellite link, namely CSI aging, so reliability enters only as an exogenous penalty rather than a derived consequence of the placement; and all are \emph{reactive}, whereas a split change is a reconfiguration whose control signaling itself traverses the feeder and must be initiated ahead of time, so the decision must be \emph{predicted}, not merely computed.

In this paper, taking the access-link channel-state vector as the decision variable, we derive how each split's CSI age sets its reliability, cast split selection as a prefer-centralized feasibility problem over reliability, aging, and fronthaul-capacity constraints, and solve it \emph{exactly}: the optimal policy is a closed-form region rule in four state coordinates, not an intractable program. As this rule is reactive, we then train a compact KAN-LSTM \cite{liu2025kan} to forecast its schedule ahead of time from the state trajectory, giving an anticipatory policy deployable as an edge xApp/rApp. Our main contributions are summarized as follows:
\begin{itemize}[leftmargin=*]
\item We propose a function-aware model that separates physical nodes from O-RAN functions through a placement mapping $\varphi_s$ over splits 7.2x/6/2 along the transparent--regenerative axis, together with a three-dimensional link-plane geometry yielding closed-form slant range, elevation, delay, and Doppler, which induces the access-link state vector $\vec{z}_t$.
\item We derive the received-SNR density of the elevation-driven LOS/NLOS mixture channel and, through a split-dependent CSI-aging mechanism, an uncoded-BER law for each split. We prove that stale CSI induces a BER floor with closed-form per-split SNR thresholds, so that reliability becomes a derived consequence of the placement rather than an exogenous penalty.
\item We formulate split selection as a prefer-centralized feasibility problem over reliability, aging, and fronthaul-capacity constraints, and we show that its exact solution is a closed-form region rule in $(\Gamma_t,\tau_{p,t},f_{d,t},B_t)$.
\item We propose a compact, xApp/rApp-deployable KAN-LSTM that forecasts the ground-truth schedule seconds ahead, so that the reconfiguration can be initiated within its reconfiguration lead time.
\item Results show that both the per-split BER thresholds and the region rule agree with direct evaluation and exhaustive enumeration without a single mismatch over a large Monte Carlo ensemble, and that the predictor attains high schedule accuracy at a sub-percent BER-violation probability while delivering nearly all of the ground-truth policy's goodput, far above random and fixed always-centralized baselines.
\end{itemize}
\section{System Model}
\label{sec:system}

\begin{figure}[t]
	\centering
	\resizebox{\columnwidth}{!}{%
		\begin{tikzpicture}[
			>=Latex, line cap=round, line join=round, font=\small,
			every node/.style={align=center},
			oran/.style={draw, rounded corners=2pt, fill=tndblue!8, font=\footnotesize, inner sep=3pt, minimum height=0.6cm},
			feeder/.style={->, semithick, densely dashed, tndgray},
			splitlink/.style={->, line width=1.2pt, densely dashed, tndred},
			f1/.style={->, semithick, densely dashed, tndgray}
			]
			\colorlet{earthblue}{tndcyan!18}\colorlet{earthedge}{tndblue}
			\colorlet{coneorange}{tndorange}\colorlet{footgreen}{tndgreen}
			\colorlet{planepurple}{tndpurple!45}\colorlet{satred}{tndred}
			\colorlet{ueblue}{tndcyan}\colorlet{orbitred}{tndred}
			\coordinate (O) at (0,0);\coordinate (S) at (0,5.6);
			\coordinate (FL) at (-1.85,2.35);\coordinate (FR) at (1.85,2.35);\coordinate (FC) at (0,2.35);
			\coordinate (U1) at (-1.10,2.18);\coordinate (U2) at (-0.45,2.47);
			\coordinate (U) at (0.55,2.18);\coordinate (U4) at (1.15,2.37);\coordinate (U5) at (0.08,1.98);
			\coordinate (Gs) at (-2.30,1.93); 
			\coordinate (Dn) at (-5.75,1.15);\coordinate (Cn) at (-4.20,0.35);
			\coordinate (P1) at (-2.65,-0.60);\coordinate (P2) at (2.90,-0.20);
			\coordinate (P3) at (2.35,6.75);\coordinate (P4) at (-3.15,7.15);
			\begin{scope}[on background layer]
				\fill[planepurple,opacity=0.20] (P1)--(P2)--(P3)--(P4)--cycle;
				\draw[planepurple!75!black,dashed,thin, dotted] (P1)--(P2)--(P3)--(P4)--cycle;
				\draw[orbitred,thin, dotted] (-5.50,0) arc[start angle=180,end angle=0,x radius=5.50,y radius=5.60];
			\end{scope}
			\node[planepurple!75!black,font=\bfseries,rotate=0] at (4,2.50) {instantaneous link plane};
			\node[planepurple!75!black,font=\bfseries] at (-1.60,6.60) {$\Pi(t)=\mathrm{span}\{\vec{r}_U(t),\vec{r}_S(t)\}$};
			\fill[earthblue] (-3,0) arc[start angle=180,end angle=0,radius=3]--cycle;
			\draw[earthedge,thin, dotted] (-3,0) arc[start angle=180,end angle=0,radius=3];
			\draw[earthedge!70,thick] (-3,0)--(3,0);
			\fill[black] (O) circle[radius=2.2pt];
			\node[below=2pt of O,font=\bfseries] {$O$};\node[below=8pt of O] {Earth center};
			\fill[coneorange,opacity=0.30] (S)--(FL) arc[start angle=180,end angle=0,x radius=1.85,y radius=0.48]--cycle;
			\draw[coneorange!90!black,very thick] (S)--(FL);\draw[coneorange!90!black,very thick] (S)--(FR);
			\draw[coneorange!90!black,dash dot,thick,->] (S)--(FC);
			\pic[draw=coneorange!90!black,very thick,->,angle radius=0.85cm,angle eccentricity=1.30,"$\mathrm{FOV}=2\theta_b$"]{angle=FL--S--FR};
			\fill[footgreen,opacity=0.38] (FC) ellipse[x radius=1.85,y radius=0.48];
			\draw[footgreen,very thick] (FC) ellipse[x radius=1.85,y radius=0.48];
			\node[footgreen!60!black,font=\bfseries] at (0,3.08) {coverage footprint};
			\fill[satred] (S) circle[radius=3pt];
			\draw[satred,thick,fill=white] ($(S)+(-0.20,-0.10)$) rectangle ($(S)+(0.20,0.10)$);
			\draw[satred,thick,fill=tndblue!15] ($(S)+(-0.72,-0.08)$) rectangle ($(S)+(-0.22,0.08)$);
			\draw[satred,thick,fill=tndblue!15] ($(S)+(0.22,-0.08)$) rectangle ($(S)+(0.72,0.08)$);
			\node[above=7pt of S,font=\bfseries] {LEO satellite $S$ (O-RU)};
			\foreach \P in {U1,U2,U4,U5}{\fill[ueblue] (\P) circle[radius=2.0pt];\draw[black,thin] (\P) circle[radius=2.0pt];}
			\node[ueblue!70!black,font=\footnotesize] at (-1.55,1.55) {other UEs in footprint};
			\fill[tndgold] (U) circle[radius=3pt];\draw[black,thick] (U) circle[radius=3pt];
			\node[above right=0pt,font=\bfseries] at (U) {user $U$};
			\draw[black!55,thick] (O)--(S);\draw[black!70,thick] (O)--(U);
			\fill[planepurple!55,opacity=0.10] (O)--(S)--(U)--cycle;
			\draw[planepurple!80!black,thick] (O)--(S)--(U)--cycle;
			\draw[black,very thick] (S)--(U);
			\node[right,font=\bfseries] at ($(S)!0.18!(U)+(0.44,0)$) {$(S,U)$ access, $d_{\mathrm{SU}}(t)$};
			\node[left] at ($(O)!0.66!(S)+(-0.10,0)$) {$OS$};
			\node[right] at ($(O)!0.55!(U)+(0.10,-0.05)$) {$OU$};
			\node[black!70,font=\footnotesize,above right=0pt] at ($(U)+(0.06,0.22)$) {$\alpha_t$};
			\begin{scope}[on background layer]
				\draw[dotted,thick,rounded corners=3pt] (-6.70,-0.55) rectangle (-3.35,2.20);
			\end{scope}
			\node[font=\bfseries] at (-5.00,1.92) {Ground gateway $G$};
			\node[oran] (D) at (Dn) {O-DU\\$D$};\node[oran] (C) at (Cn) {O-CU\\$C$};
			\draw[f1] (C) -- node[font=\footnotesize,above,pos=0.5,xshift=9pt] {F1:\ $C\!\to\!D$} (D);
			\fill[black] (Gs) circle[radius=2.2pt];\draw[black,thick] (Gs)--++(0,0.30);
			\node[above=0pt,font=\footnotesize] at ($(Gs)+(0,0.30)$) {gateway RF};
			\draw[tndgray,thick] (C.east) .. controls (-3.5,1.4) .. (Gs);
			\node[tndgray,font=\footnotesize] at (-3.05,0.75) {co-located};
			\draw[splitlink] (D.north) to[bend left=10]
			node[font=\footnotesize,left=1pt,pos=0.66,text=satred] {feeder $(G,S)$: split-$s_t^\star$ interface} ($(S)+(-0.60,-0.35)$);
		\end{tikzpicture}%
	}
	\caption{\small Function-aware O-RAN NTN topology: the satellite $S$ hosts the O-RU and the gateway $G$ the O-CU and, depending on the split, the O-DU, the feeder $(G,S)$ transporting the split-$s_t^\star$ interface, while the instantaneous link plane $\Pi(t)$ contains the $O$--$S$--$U$ triangle fixing $d_{\mathrm{SU}}(t)$, $\alpha_t$, $\tau_{p,t}$, and $f_{d,t}$.}
	\label{fig:topology}
\end{figure}
\subsection{Function-Aware Architecture and Placements}
We consider an O-RAN--NTN, as depicted in Fig.~\ref{fig:topology}. The physical node set is $\mathcal{V}_{\mathrm{phy}}=\{S,G,U\}$ (satellite, gateway, user terminal) and the O-RAN functional set is $\mathcal{V}_{\mathrm{fun}}=\{R,D,C\}$ (O-RU, O-DU, O-CU). A split-dependent placement mapping $\varphi_s:\mathcal{V}_{\mathrm{fun}}\rightarrow\mathcal{V}_{\mathrm{phy}}$, $s\in\mathcal{S}=\{s_1,s_2,s_3\}=\{7.2\mathrm{x},6,2\}$, assigns functions to hosts, with $\varphi_s(R)=S$ and $\varphi_s(C)=G$ for every candidate; moving from 7.2x toward 2 relocates progressively more PHY/MAC processing onto the payload, until $\varphi_{s_3}(D)=S$ places the entire distributed unit onboard. A split is therefore a \emph{placement decision}, not a choice among parallel interfaces on a fixed architecture. The system contains exactly two physical links: the feeder $(G,S)$, transporting the split-induced logical interface (Open Fronthaul, MAC--PHY, or F1), and the NTN access downlink $(S,U)$, the only link simulated as a wireless channel. Annotating each function with its host ($X_{\mathsf{G}}$ or $X_{\mathsf{S}}$), the three downlink chains are
\begin{align}
s_1:&\ \ C_{\mathsf{G}} \xrightarrow{\text{F1 (int.)}} D_{\mathsf{G}} \xrightarrow[\text{feeder}]{\text{Open FH}} R_{\mathsf{S}} \xrightarrow{\text{access}} U, \label{eq:chain-s1}\\
s_2:&\ \ C_{\mathsf{G}} \longrightarrow M_{\mathsf{G}} \xrightarrow[\text{feeder}]{\text{MAC--PHY}} P_{\mathsf{S}} \xrightarrow{\text{access}} U, \label{eq:chain-s2}\\
s_3:&\ \ C_{\mathsf{G}} \xrightarrow[\text{feeder}]{\text{F1}} D_{\mathsf{S}} \longrightarrow R_{\mathsf{S}} \xrightarrow{\text{access}} U, \label{eq:chain-s3}
\end{align}
where under Split 6 the O-DU is partitioned into a ground-hosted RLC/MAC block $M$ (scheduling, HARQ control, link adaptation) and an onboard complete-PHY block $P$ (estimation, coding, (de)modulation, precoding, FFT, RF). Split 7.3 retains the ground-hosted PHY of 7.2x, hence is not separable from it under the CSI-aging mechanism below.

\subsection{Three-Dimensional Link-Plane Geometry}
The Earth-centered frame has origin $O$, Earth radius $R_\oplus=6371$~km, altitude $h_0=600$~km, and carrier $f_c=2$~GHz (3GPP FR1-NTN, handheld access), with $\|\vec{r}_S(t)\|=R_\oplus+h_0$, $\|\vec{r}_U(t)\|=R_\oplus$. The orbit may be arbitrarily inclined and the user arbitrarily located, the gateway is fixed at $\vec{r}_G$, and the circular orbit is traced at $V=7.5622$~km/s. At every instant, $O$, $U$, and $S$ span the unique \emph{instantaneous link plane} $\Pi(t)=\mathrm{span}\{\vec{r}_U(t),\vec{r}_S(t)\}$, in which the entire line-of-sight (LOS) geometry of $(S,U)$ exists. With the geocentric angle $\psi(t)$ given by $\cos\psi(t)=\vec{r}_U(t)^\top\vec{r}_S(t)/[R_\oplus(R_\oplus+h_0)]$, the law of cosines in $\Pi(t)$ yields the slant range in closed form,
\begin{equation}
\label{eq:slant-range}
d_{\mathrm{SU}}(t)=\sqrt{R_\oplus^2+(R_\oplus+h_0)^2-2R_\oplus(R_\oplus+h_0)\cos\psi(t)},
\end{equation}
exact in three dimensions. With $\vec{d}_{\mathrm{SU}}(t)=\vec{r}_S(t)-\vec{r}_U(t)$ and the local zenith $\vec{n}_U(t)=\vec{r}_U(t)/\|\vec{r}_U(t)\|$, the elevation is $\alpha(t)=\arcsin\bigl(\vec{d}_{\mathrm{SU}}(t)^\top\vec{n}_U(t)/\|\vec{d}_{\mathrm{SU}}(t)\|\bigr)$, in one-to-one correspondence with $d_{\mathrm{SU}}(t)$ for fixed altitude over the $10^\circ$-masked range. The satellite velocity $\vec{v}_S(t)$ lies in the orbital plane with $\|\vec{v}_S(t)\|=V$, and Earth rotation induces the user velocity $\vec{v}_U(t)=\vec{\omega}_E\times\vec{r}_U(t)$.

\subsection{Signal Model and Observed State}
Throughout, a subscript $t$ denotes the value of a quantity at epoch $t$, whereas the argument $(t)$ is reserved for the continuous-time geometry of Sec.~\ref{sec:system}-B. The downlink waveform is CP-OFDM per 3GPP NR-NTN with per-subcarrier BPSK for analytical tractability. Symbols and fading are unit-power normalized, and antenna/beam effects are folded into fixed gains $G_{\mathrm{tx}}$, $G_{\mathrm{rx}}$. With sufficient cyclic prefix and NR-NTN timing/frequency precompensation, one resource element obeys
\begin{equation}
\label{eq:scalar-re}
y_t = \sqrt{P_t\,G_{\mathrm{tx}}\,G_{\mathrm{rx}}}\;10^{-PL_t/20}\,h_t\,x_t + n_t,
\end{equation}
where $P_t$ is the transmit power, $PL_t$ the total path loss in dB, $h_t$ the frequency-flat fading coefficient, and $n_t\sim\mathcal{CN}(0,N_0 B_t)$, with $N_0$ the noise power spectral density and $B_t$ the occupied bandwidth. The SNR of the received resource element is $\Gamma_t=\mathrm{SNR}_t=P_t G_{\mathrm{tx}} G_{\mathrm{rx}} |h_t|^2 10^{-PL_t/10}/(N_0 B_t)=\beta_t |h_t|^2$, with $\beta_t$ collecting the deterministic large-scale terms.

\begin{definition}[Observed access-link state]
\label{def:state-vector}
At time $t$, the observed access-link state of $(S,U)$ is $\vec{z}_t = [\,|h_t|, \mathrm{SNR}_t, B_t, \alpha_t, \tau_{p,t}, f_{d,t}\,]^\top \in \mathbb{R}^6$, where $\tau_{p,t}$ is the one-way propagation delay and $f_{d,t}$ the Doppler shift.
\end{definition}
The feeder $(G,S)$ is not modeled as a stochastic channel; it enters only through the transport rate $R_{\mathrm{fh}}(s_t)$ and the CSI-age budgets below.

\section{Channel Statistics and CSI-Aging Analysis}
\label{sec:analysis}
\subsection{LOS/NLOS Mixture Channel}
Let $L_t\in\{0,1\}$ denote the propagation state (1 for LOS and 0 for NLOS). Conditioned on elevation and environment class $e_t$, $L_t \sim \mathrm{Bernoulli}(p_{\mathrm{LOS}}(\alpha_t,e_t))$. The path loss decomposes as $PL_t = FSPL(d_{\mathrm{SU}}(t),f_c) + SF_t + CL_t + PL_g + PL_s$, where $FSPL(\cdot)$ is the free-space path loss at the exact slant range \eqref{eq:slant-range}, $SF_t$ the lognormal shadowing with state-dependent variance, $CL_t$ the clutter loss, active only under NLOS, and $PL_g$ and $PL_s$ the coarse gaseous-absorption and scintillation terms, respectively. The unit-power small-scale amplitude is state-conditioned,
$
|h_t| \mid (L_t=1,\alpha_t) \sim \mathrm{Rician}\!\bigl(K(\alpha_t)\bigr),
|h_t| \mid (L_t=0) \sim \mathrm{Rayleigh},
$
so that, conditioned on $(\alpha_t,e_t)$, the fading law is the elevation-driven mixture of the Rician density $f_{|h|\mid L=1}(h)=2(1+K_t)h\,e^{-K_t-(1+K_t)h^2} I_0(2h\sqrt{K_t(1+K_t)})$ and the Rayleigh density $f_{|h|\mid L=0}(h)=2h\,e^{-h^2}$, weighted by $p_{\mathrm{LOS}}(\alpha_t,e_t)$.

\begin{lemma}
\label{lem:snr}
Conditioned on $(\alpha_t,e_t,L_t,SF_t,B_t)$, the pdf of $\Gamma_t=\beta_t|h_t|^2$ is
$f_{\Gamma_t}(\gamma)=\tfrac{1}{2\sqrt{\beta_t\gamma}} f_{|h_t|}(\sqrt{\gamma/\beta_t})$, $\gamma\ge0$. In particular,
$f_{\Gamma_t \mid L_t=1}(\gamma)
= \tfrac{1+K_t}{\beta_t}
e^{-K_t-(1+K_t)\gamma/\beta_t}
I_0\bigl(2\sqrt{K_t(1+K_t)\gamma/\beta_t}\bigr)$
and
$f_{\Gamma_t \mid L_t=0}(\gamma)
= \tfrac{1}{\beta_t}e^{-\gamma/\beta_t}$,
and, conditioned only on $(\alpha_t,e_t,SF_t,B_t)$, the SNR pdf is the $p_{\mathrm{LOS}}$-weighted mixture of the two conditional laws, each branch evaluated with its own $\beta_t$ (state-dependent $CL_t$ and shadowing variance).
\end{lemma}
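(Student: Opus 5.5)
The plan is a one-dimensional change of variables followed by substitution of the two amplitude laws and then total probability over $L_t$.

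\emph{Step 1 (general density).} Condition on $(\alpha_t,e_t,L_t,SF_t,B_t)$. Under this conditioning every large-scale term in $PL_t$ is fixed: $FSPL$ through $d_{\mathrm{SU}}(t)$, which is a function of $\alpha_t$ by \eqref{eq:slant-range}, together with $SF_t$, $CL_t$ (active only when $L_t=0$), $PL_g$, and $PL_s$. Since $P_t,G_{\mathrm{tx}},G_{\mathrm{rx}},N_0$ are deterministic, $\beta_t>0$ is a constant. The map $g(h)=\beta_t h^2$ is then strictly increasing on $h\ge0$, with inverse $h=\sqrt{\gamma/\beta_t}$ and Jacobian $dh/d\gamma=1/(2\sqrt{\beta_t\gamma})$. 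Computing the CDF,
\[
P(\Gamma_t\le\gamma)=F_{|h_t|}\bigl(\sqrt{\gamma/\beta_t}\bigr),
\]
and differentiating for $\gamma>0$ gives the stated $f_{\Gamma_t}$. The point $\gamma=0$ has measure zero and does not matter.

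\emph{Step 2 (branches).} Substitute the two amplitude densities at $h=\sqrt{\gamma/\beta_t}$.
\begin{itemize}
\item Rician branch: the prefactor is $2(1+K_t)\sqrt{\gamma/\beta_t}\cdot\frac{1}{2\sqrt{\beta_t\gamma}}=(1+K_t)/\beta_t$. The exponent becomes $-K_t-(1+K_t)\gamma/\beta_t$, and the Bessel argument becomes $2\sqrt{K_t(1+K_t)\gamma/\beta_t}$.
\item Rayleigh branch: the prefactor is $2\sqrt{\gamma/\beta_t}\cdot\frac{1}{2\sqrt{\beta_t\gamma}}=1/\beta_t$, which yields the exponential law.
\end{itemize}

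\emph{Step 3 (mixture).} Drop the conditioning on $L_t$ and apply the law of total probability with $P(L_t=1\mid\alpha_t,e_t)=p_{\mathrm{LOS}}(\alpha_t,e_t)$. This gives
\[
f_{\Gamma_t}=p_{\mathrm{LOS}}\,f_{\Gamma_t\mid L_t=1}+(1-p_{\mathrm{LOS}})\,f_{\Gamma_t\mid L_t=0}.
\]
Each branch carries its own $\beta_t$, because $CL_t$ and the shadowing variance depend on $L_t$.

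\emph{Main obstacle.} The only delicate step is Step 3's conditioning bookkeeping. The statement conditions on $SF_t$, but the shadowing distribution itself is state-dependent. I would handle this by reading ``conditioned on $SF_t$'' as conditioning on the realized shadowing value of each branch, denoted $\beta_t^{(1)}$ and $\beta_t^{(0)}$. I would also state explicitly that $L_t$ is independent of the small-scale fading given the state, so the conditional amplitude laws are exactly those given.
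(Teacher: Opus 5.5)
Your proposal is correct and follows the paper's own abbreviated proof: the monotone change of variables $\gamma=\beta_t h^2$ with Jacobian $1/(2\sqrt{\beta_t\gamma})$, then substitution of the Rician and Rayleigh amplitude densities (your prefactor, exponent, and Bessel-argument algebra all check out), then averaging over the Bernoulli LOS state. Your reading of the $SF_t$ conditioning as branch-specific realized shadowing, giving $\beta_t^{(1)}$ and $\beta_t^{(0)}$, usefully clarifies what the paper leaves implicit in ``each branch evaluated with its own $\beta_t$''; without it, conditioning on a single observed shadowing value with state-dependent variance would shift the mixture weight away from $p_{\mathrm{LOS}}$.
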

\begin{proof}
We use the monotone map $\gamma=\beta_t h^2$, for which
$
\frac{dh}{d\gamma}=\frac{1}{2\sqrt{\beta_t\gamma}}.
$
We apply the change-of-variables formula and average over the Bernoulli state to obtain the result. Due to space limitations, we omit the intermediate derivation steps.
\end{proof}

\begin{figure}[t]
\centering
\subfloat[]{\includegraphics[width=0.4\columnwidth]{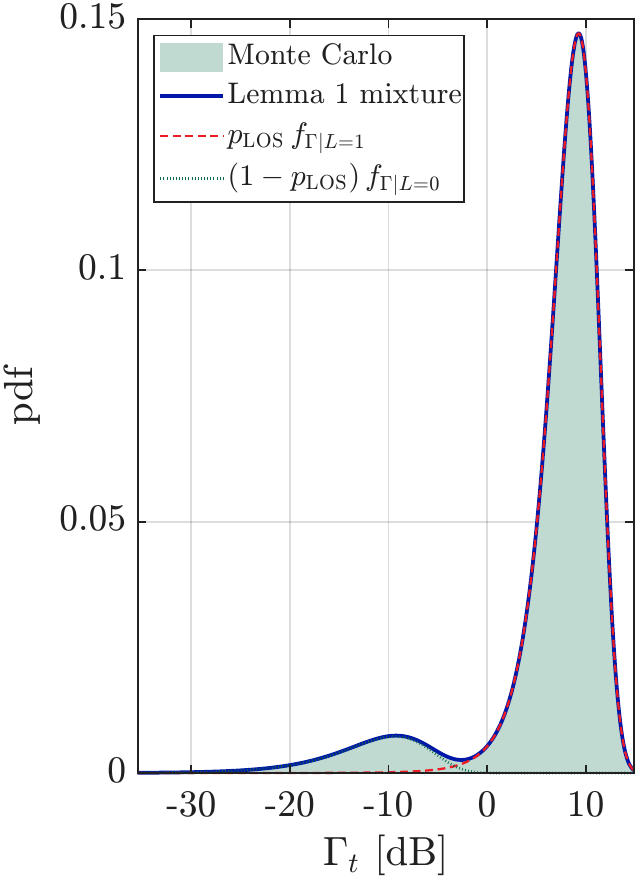}%
\label{fig:snr-validation}}
\hfil
\subfloat[]{\includegraphics[width=0.4\columnwidth]{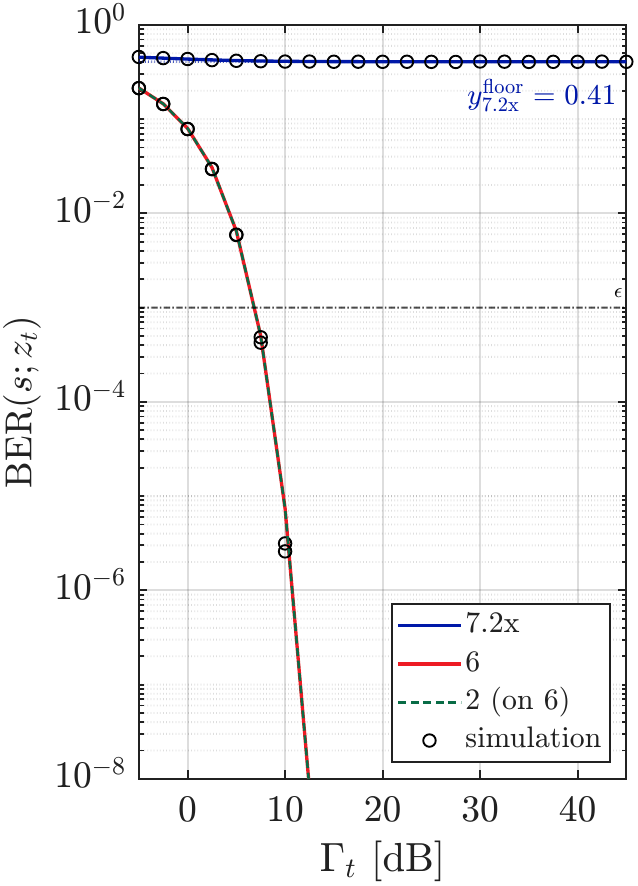}%
\label{fig:ber-floor}}
\caption{\small Validation of the analysis: (a) mixture SNR density of Lemma~\ref{lem:snr} (dB domain); (b) BER law and aging floors of Proposition~\ref{prop:ber-aging}.}
\label{fig:analysis-validation}
\end{figure}

\subsection{Delay, Doppler, and Split-Dependent CSI Age}
\label{subsec:csi-age}
The access delay is the link-plane quantity $\tau_p(t)=d_{\mathrm{SU}}(t)/c$ (6.44~ms at the $10^\circ$ edge to 2.0~ms at zenith for 600~km), and the Doppler shift is the three-dimensional projection $f_d(t)=-(f_c/c)\bigl(\vec{v}_S(t)-\vec{v}_U(t)\bigr)^\top \hat{\vec{d}}_{\mathrm{SU}}(t)$, with $\hat{\vec{d}}_{\mathrm{SU}}=\vec{d}_{\mathrm{SU}}/\|\vec{d}_{\mathrm{SU}}\|$, which remains exact for an inclined orbit, arbitrary user, and rotating Earth. The common Doppler is precompensated by the NR-NTN loops, leaving a residual $\tilde{f}_d(t)=f_d(t)-\hat{f}_d(t)$.

The placement $\varphi_s$ determines where the CSI consumed by PHY adaptation (estimation, precoding) and MAC adaptation (scheduling, MCS, HARQ) is processed, hence how old it is when applied. With $T_s^{\mathrm{age}}=t_s^{\mathrm{use}}-t^{\mathrm{obs}}$ and the feeder delay $\tau_f(t)=\|\vec{r}_S(t)-\vec{r}_G\|/c$, the two aging timescales per split are
\begin{align}
T_{s_1}^{\mathrm{PHY}} &= 2\tau_f + T_{s_1}^{\mathrm{proc,PHY}}, &
T_{s_1}^{\mathrm{MAC}} &= 2\tau_f + T_{s_1}^{\mathrm{proc,MAC}}, \nonumber\\
T_{s_2}^{\mathrm{PHY}} &= T_{s_2}^{\mathrm{proc,PHY}}, &
T_{s_2}^{\mathrm{MAC}} &= 2\tau_f + T_{s_2}^{\mathrm{proc,MAC}}, \nonumber\\
T_{s_3}^{\mathrm{PHY}} &= T_{s_3}^{\mathrm{proc,PHY}}, &
T_{s_3}^{\mathrm{MAC}} &= T_{s_3}^{\mathrm{proc,MAC}},
\label{eq:csi-ages}
\end{align}
where each $T_s^{\mathrm{proc,X}}$ collects the sub-millisecond estimation, precoding, measurement, scheduling, and queueing terms of the corresponding loop: Split 7.2x closes both loops through the feeder (twice per loop), Split 6 closes the PHY loop onboard but not the MAC loop, and Split 2 closes both onboard. Since $\tau_f$ is several milliseconds, the ordering
\begin{equation}
\label{eq:age-ordering}
T_{s_3}^{\mathrm{PHY}}=T_{s_2}^{\mathrm{PHY}} \ll T_{s_1}^{\mathrm{PHY}},\quad
T_{s_3}^{\mathrm{MAC}} \ll T_{s_2}^{\mathrm{MAC}} \approx T_{s_1}^{\mathrm{MAC}},
\end{equation}
holds throughout the visibility window, dominated by $2\tau_f(t)$.

\subsection{Aging Error, Effective SNR, and BER Law}
\label{subsec:aging-ber}
The temporal correlation of the scalar channel at residual maximum Doppler $\tilde{f}_d^{\max}$ is modeled by the Jakes form $\rho(\Delta t)\approx J_{0}(2\pi \tilde{f}_d^{\max}\Delta t)$, a conservative working model for the LOS-dominated regime. Under the first-order aging model $h_{t+T_{s}^{\mathrm{age}}}=\rho_{s}h_t+\sqrt{1-|\rho_{s}|^{2}}\,e_t$ with $\rho_s=\rho(T_s^{\mathrm{age}})$ and assuming the MMSE one-tap predictor $\hat{h}_{t+T_s^{\mathrm{age}}}=\rho_s h_t$ with ideal initial estimation, the normalized CSI error is $\varepsilon_{s}^{\mathrm{X}}=1-|\rho(T_{s}^{\mathrm{X}})|^{2}$ for $\mathrm{X}\in\{\mathrm{PHY},\mathrm{MAC}\}$, nonincreasing toward more onboard processing by \eqref{eq:age-ordering}. Normalizing the innovation to the instantaneous gain, $\mathbb{E}[|e_t|^2\mid h_t]=|h_t|^2$, as residual Doppler rotates the LOS channel itself, and treating it as self-interference, the mismatched-reception approximation yields the effective post-aging SNR $\gamma_{s}^{\mathrm{eff}}=|\rho_s|^{2}\Gamma_t\big/[1+(1-|\rho_s|^{2})\Gamma_t]$, with $\rho_s=\rho(T_{s}^{\mathrm{PHY}})$, which is strictly increasing in $\Gamma_t$ and saturates at $|\rho_s|^2/(1-|\rho_s|^2)$: stale CSI induces an error floor that no transmit-power increase can remove. The uncoded BER of split $s$ under coherent BPSK is $\mathrm{BER}(s;\vec{z}_t)=Q(\sqrt{2\gamma_{s}^{\mathrm{eff}}})$, so that, unlike formulations with an exogenous per-split SNR penalty, the split-dependence of the reliability is \emph{derived} from the placement. By \eqref{eq:age-ordering}, $\rho_{s_1}\le\rho_{s_2}=\rho_{s_3}$ and hence $\mathrm{BER}(s_1;\vec{z}_t)\ge\mathrm{BER}(s_2;\vec{z}_t)=\mathrm{BER}(s_3;\vec{z}_t)$: relocating the PHY onboard provably improves the access link, and Splits 6 and 2 are distinguished not by BER but by the MAC-level aging constraint.

\begin{proposition}
\label{prop:ber-aging}
Let $\kappa_s=1-|\rho_s|^2$ and $u_s(\gamma)=|\rho_s|^2\gamma/(1+\kappa_s\gamma)$, so that $u_s(\Gamma_t)=\gamma_s^{\mathrm{eff}}$ is the post-aging SNR of Sec.~\ref{subsec:aging-ber}. Then $u_s$ is strictly increasing with supremum $u_s^{\max}=|\rho_s|^2/\kappa_s$, so $Y_{s,t}=\mathrm{BER}(s;\vec{z}_t)$ is supported on $(y_s^{\mathrm{floor}},\tfrac12]$ with the aging-induced floor $y_s^{\mathrm{floor}}=Q\bigl(\sqrt{2|\rho_s|^{2}/(1-|\rho_s|^{2})}\bigr)$, and, for any target $\epsilon\in(0,\tfrac12)$ with $u_{\mathrm{th}}=\tfrac12(Q^{-1}(\epsilon))^2$,
\begin{equation}
\label{eq:aging-threshold}
\mathrm{BER}(s;\vec{z}_t)\le\epsilon
\iff
\Gamma_t \ \ge\ \gamma_s^{\mathrm{th}}
= u_{\mathrm{th}}\big/\bigl(|\rho_s|^2-\kappa_s u_{\mathrm{th}}\bigr)
\end{equation}
whenever $u_{\mathrm{th}}<u_s^{\max}$, the constraint being infeasible at any power otherwise. For $\rho_s=1$, \eqref{eq:aging-threshold} reduces to the pure-AWGN threshold $\gamma_s^{\mathrm{th}}=u_{\mathrm{th}}$.
\end{proposition}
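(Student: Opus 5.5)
The plan is to treat $\mathrm{BER}(s;\vec{z}_t)=Q\bigl(\sqrt{2u_s(\Gamma_t)}\bigr)$ as a composition of monotone maps and invert it step by step. Assume $0<|\rho_s|^2<1$, so that $\kappa_s\in(0,1)$; the endpoint case $\rho_s=1$ is handled at the end.

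First, the monotonicity and range of $u_s$. Differentiating on $\gamma\ge0$ gives
\[
u_s'(\gamma)=\frac{|\rho_s|^2}{(1+\kappa_s\gamma)^2}>0 ,
\]
so $u_s$ is strictly increasing. It satisfies $u_s(0)=0$, and dividing numerator and denominator by $\gamma$ shows
\[
\lim_{\gamma\to\infty}u_s(\gamma)=\frac{|\rho_s|^2}{\kappa_s}=u_s^{\max}.
\]
Strict monotonicity means this limit is never attained. Hence $u_s$ maps $[0,\infty)$ bijectively onto $[0,u_s^{\max})$.

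Next, the support of the BER. The map $u\mapsto Q(\sqrt{2u})$ is continuous and strictly decreasing on $[0,\infty)$, with value $Q(0)=\tfrac12$ at $u=0$. Composing it with $u_s$ and using $\Gamma_t\ge0$ (Lemma~\ref{lem:snr} gives $\Gamma_t$ a density on $[0,\infty)$) shows that $Y_{s,t}$ takes values in $\bigl(Q(\sqrt{2u_s^{\max}}),\tfrac12\bigr]$. The lower end equals $y_s^{\mathrm{floor}}$ after substituting $u_s^{\max}=|\rho_s|^2/(1-|\rho_s|^2)$, and it is excluded because the supremum of $u_s$ is not attained. If the support is meant as the support of the law, I would add one remark: the Rician and Rayleigh densities in Lemma~\ref{lem:snr} are positive on all of $(0,\infty)$. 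Every subinterval of the range is therefore charged, so the support is the whole interval and not a proper subset.

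Then the threshold equivalence. Since $Q$ is strictly decreasing and $\sqrt{2\,\cdot}$ is strictly increasing, $Q(\sqrt{2u})\le\epsilon$ holds iff $u\ge u_{\mathrm{th}}$, where $u_{\mathrm{th}}=\tfrac12(Q^{-1}(\epsilon))^2$. The restriction $\epsilon<\tfrac12$ ensures $Q^{-1}(\epsilon)>0$, so this inversion through the square root is legitimate. Two cases follow.
\begin{itemize}
\item If $u_{\mathrm{th}}\ge u_s^{\max}$, then $u_s(\Gamma_t)<u_s^{\max}\le u_{\mathrm{th}}$ for every $\Gamma_t$, and the BER constraint is infeasible at any power.
\item If $u_{\mathrm{th}}<u_s^{\max}$, the condition $u_s(\gamma)\ge u_{\mathrm{th}}$ can be solved directly. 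Because $1+\kappa_s\gamma>0$, multiplying through gives $|\rho_s|^2\gamma\ge u_{\mathrm{th}}(1+\kappa_s\gamma)$, that is, $\gamma\bigl(|\rho_s|^2-\kappa_s u_{\mathrm{th}}\bigr)\ge u_{\mathrm{th}}$. The factor $|\rho_s|^2-\kappa_s u_{\mathrm{th}}$ is positive exactly when $u_{\mathrm{th}}<u_s^{\max}$, so dividing by it preserves the inequality and yields $\gamma\ge\gamma_s^{\mathrm{th}}$ as stated in \eqref{eq:aging-threshold}.
\end{itemize}

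Finally, the case $\rho_s=1$. Here $\kappa_s=0$ and $u_s(\gamma)=\gamma$, so $u_s^{\max}=\infty$ and the floor is $0$, read as the limit $Q(\infty)=0$. The threshold formula then reduces to $\gamma_s^{\mathrm{th}}=u_{\mathrm{th}}$, the pure-AWGN threshold.

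The argument is elementary throughout. The only points that need care are the strictness of the floor (open endpoint), the sign of the denominator when dividing in the threshold derivation, and the degenerate cases $\kappa_s=0$ and $\rho_s=0$. For $\rho_s=0$ we have $u_s\equiv0$, so the BER is identically $\tfrac12$ and the constraint is always infeasible. This is consistent with the statement, since then $u_s^{\max}=0\le u_{\mathrm{th}}$.
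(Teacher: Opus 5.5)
Your proposal is correct and follows the paper's proof essentially step for step. The derivative and limit of $u_s$ give a bijection of $[0,\infty)$ onto $[0,u_s^{\max})$, monotonicity of $Q$ with $Q(0)=\tfrac12$ gives the range with an unattained floor, and the threshold comes from solving $u_s(\Gamma_t)\ge u_{\mathrm{th}}$, where the coefficient $|\rho_s|^2-\kappa_s u_{\mathrm{th}}$ is positive exactly when $u_{\mathrm{th}}<u_s^{\max}$.

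Your direct rearrangement of the inequality and your separate treatment of $\rho_s=0$ are minor refinements; the paper passes over $\rho_s=0$, where strict monotonicity of $u_s$ actually fails. One small quibble: the support of a law is a closed set, so in that sense it is $[y_s^{\mathrm{floor}},\tfrac12]$, and the statement's half-open interval is best read as the range of $Y_{s,t}$.
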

\begin{proof}
Since $u_s'(\gamma)=|\rho_s|^2/(1+\kappa_s\gamma)^2>0$, $u_s$ is strictly increasing on $\gamma\ge0$ with $u_s(0)=0$ and $u_s(\gamma)\to|\rho_s|^2/\kappa_s=u_s^{\max}$ as $\gamma\to\infty$, so it maps $[0,\infty)$ bijectively onto $[0,u_s^{\max})$. As $Q$ is strictly decreasing with $Q(0)=\tfrac12$, $\mathrm{BER}(s;\vec{z}_t)=Q(\sqrt{2u_s(\Gamma_t)})$ is strictly decreasing in $\Gamma_t$ and takes values in $(Q(\sqrt{2u_s^{\max}}),\tfrac12]$, which is the stated support, the floor $y_s^{\mathrm{floor}}$ being approached but never attained as $\Gamma_t\to\infty$. For the threshold, strict monotonicity of $Q$ gives $\mathrm{BER}(s;\vec{z}_t)\le\epsilon\iff u_s(\Gamma_t)\ge u_{\mathrm{th}}$. If $u_{\mathrm{th}}\ge u_s^{\max}$ no $\Gamma_t$ satisfies this, whence infeasibility at any power. Otherwise, inverting $u=u_s(\gamma)$ yields $\gamma=u/(|\rho_s|^2-\kappa_s u)$, with positive denominator exactly when $u<u_s^{\max}$, and monotonicity gives $u_s(\Gamma_t)\ge u_{\mathrm{th}}\iff\Gamma_t\ge\gamma_s^{\mathrm{th}}$. Finally $\rho_s=1$ gives $\kappa_s=0$, hence $u_s(\gamma)=\gamma$ and $\gamma_s^{\mathrm{th}}=u_{\mathrm{th}}$.
\end{proof}
Fig.~\ref{fig:ber-floor} confirms the law against a bit-error simulation in which the receiver detects with aged CSI while the realized channel is $h_{t+T_s^{\mathrm{age}}}=\rho_s h_t+\sqrt{1-|\rho_s|^2}\,e_t$: the transparent split saturates at $y_{s_1}^{\mathrm{floor}}$, never reaching $\epsilon$ at any power, while $y_{s_2}^{\mathrm{floor}}$ and $y_{s_3}^{\mathrm{floor}}$ of the two regenerative splits coincide and fall off exponentially.

\begin{remark}[Digitized fronthaul versus analog bent-pipe]
\label{rem:analog-vs-digital}
Since the O-RU is onboard for every candidate, even $s_1$ relays a \emph{digitized} fronthaul, so the two hops decouple ($\gamma_{\mathrm{e2e}}=\Gamma_t$) rather than accumulating noise as in an analog bent pipe, $\gamma_{\mathrm{e2e}}=\gamma_f\gamma_a/(\gamma_f+\gamma_a+1)$.
\end{remark}

\section{Split Optimization and Its Exact Solution}
\label{sec:solution}
\subsection{Fronthaul Rates and Constrained Selection}
\label{subsec:fronthaul}
Split 7.2x transports frequency-domain IQ samples over the Open Fronthaul at
\begin{equation}
\label{eq:fh-split1}
R_{\mathrm{fh}}(s_1) = 2 q_{\mathrm{IQ}} N_{\mathrm{stream}} N_{\mathrm{RE}}^{\mathrm{act}} \bigl(1+\xi_{1}\bigr) \big/ \bigl(T_{\mathrm{slot}}\eta_{\mathrm{comp}}\bigr),
\end{equation}
with $q_{\mathrm{IQ}}=16$ quantization bits, $N_{\mathrm{stream}}=4$ streams, $N_{\mathrm{RE}}^{\mathrm{act}}=14 N_{sc}$ resource elements per slot ($N_{sc}=B_t/\Delta f$, $\Delta f=15$~kHz), $T_{\mathrm{slot}}=1$~ms, compression $\eta_{\mathrm{comp}}$, and overhead $\xi_1$. Splits 6 and 2 transport MAC--PHY and F1 traffic at the delivered information rate, with $R_c$ the channel code rate, $R_{\mathrm{fh}}(s_2)=R_c B(1+\xi_2)+R_{\mathrm{ctrl},2}$ and $R_{\mathrm{fh}}(s_3)=R_c B(1+\xi_3)+R_{\mathrm{ctrl},3}$, so $R_{\mathrm{fh}}(s_1)\gg R_{\mathrm{fh}}(s_2)>R_{\mathrm{fh}}(s_3)$ against a finite feeder capacity $C_{\mathrm{fh}}$.

Centralized (ground-side) processing is the architecturally preferred operating point: it preserves O-RAN pooling and upgradeability and minimizes the payload burden. With centralization order $s_1\prec s_2\prec s_3$, define the feasible set
{\small
\begin{equation}
\label{eq:feasible-set}
\mathcal{F}(\vec{z}_t) = \left\{s \in \mathcal{S} :
\begin{aligned}
&\mathrm{BER}(s;\vec{z}_t) \leq \epsilon,\\
&\varepsilon_s^{\mathrm{PHY}}(\vec{z}_t) \leq \varepsilon_{\mathrm{PHY}}^{\max},\ \
\varepsilon_s^{\mathrm{MAC}}(\vec{z}_t) \leq \varepsilon_{\mathrm{MAC}}^{\max},\\
&R_{\mathrm{fh}}(s;B_t) \leq C_{\mathrm{fh}},\ \
\text{payload limits satisfied}
\end{aligned}
\right\},
\end{equation}
}
where the onboard compute/power/memory limits are treated as satisfied constants. The selection rule is the prefer-centralized feasibility ladder: among the feasible splits, take the least regenerative one,
\begin{equation}
\label{eq:ladder-rule}
s_t^\star = \min_{\prec}\ \mathcal{F}(\vec{z}_t),
\end{equation}
escalating onboard processing only as the channel dynamics and transport limits require; if $\mathcal{F}(\vec{z}_t)=\varnothing$ the state is rejected during dataset construction. Unlike a BER-only formulation, the feasible set is not nested: aging disqualifies centralized splits under fast dynamics and capacity disqualifies 7.2x under narrow feeder margins, independently of the SNR.

Two closures make $s_t^\star$ a deterministic function of $\vec{z}_t$: the gateway lies within the regional coverage area, and the frequency-compensation loop leaves a fixed residual fraction $\zeta\in(0,1)$ ($10^{-3}$, ${\approx}45$~Hz), so that $\tau_f(t) \approx \tau_{p,t}$ and $\tilde{f}_d^{\max}(t) = \zeta|f_{d,t}|$.
All per-split ages, correlations, and aging errors are then computable from $(\tau_{p,t},f_{d,t})$ and fixed constants.

\subsection{Closed-Form Constraint Inversions and Optimal Solution}
\label{subsec:inversions}
Because $\sum_s x_s=1$ restricts any binary-program encoding of \eqref{eq:ladder-rule} to three points, the exact solution is an $O(|\mathcal{S}|)$ feasibility scan; what makes it a \emph{closed-form solution} is that every active constraint inverts analytically. (i)~\emph{Reliability:} by Proposition~\ref{prop:ber-aging}, $\mathrm{BER}(s;\vec{z}_t)\le\epsilon \iff \Gamma_t\ge\gamma_s^{\mathrm{th}}$. (ii)~\emph{CSI aging:} since $\varepsilon_s^{\mathrm{X}}\le\varepsilon_{\mathrm{X}}^{\max} \iff |\rho(T_s^{\mathrm{X}})|\ge\sqrt{1-\varepsilon_{\mathrm{X}}^{\max}}$, the Jakes model with the closures above and the principal (monotone) branch of $J_0$ gives the Doppler--delay bound
\begin{equation}
\label{eq:aging-bound}
|f_{d,t}|\; T_s^{\mathrm{X}}(\tau_{p,t})
\ \le\
J_0^{-1}\!\bigl(\sqrt{1-\varepsilon_{\mathrm{X}}^{\max}}\bigr)\big/(2\pi\zeta)
\ \triangleq \ c_{\mathrm{X}},
\end{equation}
with $T_s^{\mathrm{X}}(\tau_{p,t})=2\tau_{p,t}+T_s^{\mathrm{proc,X}}$ for the ground-closed loops; the onboard-closed loops are state-independent constants verified offline, so only $\varepsilon_{s_1}^{\mathrm{PHY}}$, $\varepsilon_{s_1}^{\mathrm{MAC}}$, $\varepsilon_{s_2}^{\mathrm{MAC}}$ are checked per sample. (iii)~\emph{Capacity:} \eqref{eq:fh-split1} is linear in $B_t$, so $R_{\mathrm{fh}}(s_1;B_t)\le C_{\mathrm{fh}}\iff B_t\le B_1^{\max}=C_{\mathrm{fh}}T_{\mathrm{slot}}\eta_{\mathrm{comp}}\Delta f\big/[28 q_{\mathrm{IQ}}N_{\mathrm{stream}}(1+\xi_1)]$, the thresholds for $s_2,s_3$ being non-binding by the rate ordering.

\begin{remark}
\label{rem:region-rule}
Under the closures of Sec.~\ref{subsec:fronthaul} and the principal-branch restriction of $J_0$, the optimal label \eqref{eq:ladder-rule} is a deterministic, piecewise-constant function of $(\Gamma_t,\tau_{p,t},f_{d,t},B_t)$:
\begin{equation}
\label{eq:region-rule}
s_t^\star=
\begin{cases}
s_1, & \vec{z}_t\in\mathcal{R}_1\cap\mathcal{A}_1\cap\mathcal{B}_1,\\
s_2, & \text{else if } \vec{z}_t\in\mathcal{R}_2\cap\mathcal{A}_2,\\
s_3, & \text{else if } \vec{z}_t\in\mathcal{R}_3,\\
\varnothing, & \text{otherwise (sample rejected)},
\end{cases}
\end{equation}
where $\mathcal{R}_s=\{\Gamma_t\ge\gamma_s^{\mathrm{th}}\}$, $\mathcal{A}_s$ are the aging regions \eqref{eq:aging-bound} active under $s$, and $\mathcal{B}_1=\{B_t\le B_1^{\max}\}$.
\end{remark}

\section{KAN-LSTM Split-Schedule Predictor}
\label{sec:kan-lstm}
\subsection{From Reactive Selection to Schedule Prediction}
The optimal solution is \emph{reactive}, but a split change is a placement reconfiguration $\varphi_{s}\rightarrow\varphi_{s'}$ whose control signaling itself traverses the feeder, so it must be initiated with a lead time $\Delta \gtrsim 2\tau_f + T_{\mathrm{rc}}$, with $T_{\mathrm{rc}}$ the second-scale non-RT RIC loop and function instantiation \cite{Polese_COMST_2023}. This motivates predicting the ground-truth label $\Delta$ steps ahead from the state history, $f_\theta:(\vec{z}_1,\dots,\vec{z}_t)\mapsto\hat{y}_{t+\Delta}\in\{1,2,3\}$, with $\Delta=0$ recovering per-sample classification. Although the instantaneous label is memoryless (Remark~\ref{rem:region-rule}), the \emph{future} label depends on the link-plane trajectory, whose elevation, delay, and Doppler trends anticipate upcoming region-boundary crossings.

\subsection{Architecture, Data, and Training}
The predictor is a compact recurrent model whose LSTM affine maps are replaced by Kolmogorov--Arnold-inspired layers \cite{liu2025kan}: we take each map to be a powered-ReLU branch with a linear residual,
\begin{equation}
\label{eq:powermlp-map}
\Psi(u) = \sigma_k\!\left(W u + b\right) + W_b\,u,
\qquad
\sigma_k(v)=\max(0,v)^k,
\end{equation}
with $k=3$. With $\tilde{\vec{z}}_t$ the standardized features (SNR in dB) and $x_t=[\tilde{\vec{z}}_t^\top, h_{t-1}^\top]^\top$, the gates are $i_t = \mathrm{sigm}(\Psi_i(x_t))$, $f_t = \mathrm{sigm}(\Psi_f(x_t))$, $o_t = \mathrm{sigm}(\Psi_o(x_t))$, and $\tilde{c}_t = \tanh(\Psi_c(x_t))$, with cell update $c_t = f_t \odot c_{t-1} + i_t \odot \tilde{c}_t$ and $h_t = o_t \odot \tanh(c_t)$, with a read-out layer of the same form \eqref{eq:powermlp-map} producing the logits $g_\theta(t)=W^{\mathrm{o}}\Psi_r(h_t)+b^{\mathrm{o}}$, softmax posterior $p_\theta(i\mid \vec{z}_{1:t})$, and $\hat{y}_{t+\Delta}=\arg\max_i [g_\theta(t)]_i$. At hidden width 16 the model retains an $\mathcal{O}(10^3)$-parameter footprint, deployable as an xApp/rApp on the gateway RIC \cite{Polese_COMST_2023,Do_RIVF_2025}, with one cell update per 1-Hz epoch.

Training data are \emph{pass sequences}: visibility windows with varying orbit orientation and user location, sampled at 1~Hz, with states $\{\vec{z}_t\}$ from the stochastic channel and labels $\{y_t\}$ from the ground-truth policy \eqref{eq:region-rule} (rejected epochs masked). Pairs $(\vec{z}_{1:t}, y_{t+\Delta})$ shift the labels by the horizon, so the target is always the exact optimal schedule; passes (not epochs) are partitioned into training/validation/test. The loss is the class-weighted cross-entropy,
\begin{equation}
\label{eq:kan-lstm-loss}
\mathcal{L}(\theta)
= -\frac{1}{\sum_p |\mathcal{T}_p|}\sum_{p}\sum_{t\in\mathcal{T}_p}
w_{y_{t+\Delta}}\,
\log p_\theta\!\left(y_{t+\Delta} \mid \vec{z}_{1:t}\right),
\end{equation}
with inverse-frequency weights $w_i\propto N/N_i$, optimized by truncated BPTT with Adam. Beyond schedule accuracy, the BER-violation probability $V_{\mathrm{BER}}=\Pr[\mathrm{BER}(\hat{s}_{t+\Delta};\vec{z}_{t+\Delta})>\epsilon]$, and the signed fronthaul-rate gap $\Delta\bar{R}_{\mathrm{fh}}=\overline{R_{\mathrm{fh}}(\hat{s}_{t+\Delta})-R_{\mathrm{fh}}(s^\star_{t+\Delta})}$, positive when a policy loads the feeder more heavily than the ground-truth policy and negative when it loads it less, we report \emph{event-level} metrics (switch-detection delay, false-switch rate), since a trivial persistence predictor scores highly on a piecewise-constant schedule by never switching.

\section{Numerical Results}
\label{sec:results}
\subsection{Scenario Parameters and Evaluation Ensembles}
All results use the 600~km S-band suburban scenario with $\epsilon=10^{-3}$, $\varepsilon_{\mathrm{PHY}}^{\max}=0.30$, $\varepsilon_{\mathrm{MAC}}^{\max}=0.50$, $C_{\mathrm{fh}}=500$~Mbps, $\zeta=10^{-3}$, and $B_t\in\{5,10,20\}$~MHz. The optimal solution is evaluated on a large Monte Carlo ensemble of access-link states, with repeated fading draws per epoch and the transmit power swept over its operating range; schedule prediction uses randomized passes sampled at 1~Hz, partitioned at the pass level into training, validation, and test sets.

\begin{figure}[t]
	\centering
	\includegraphics[width=0.95\columnwidth]{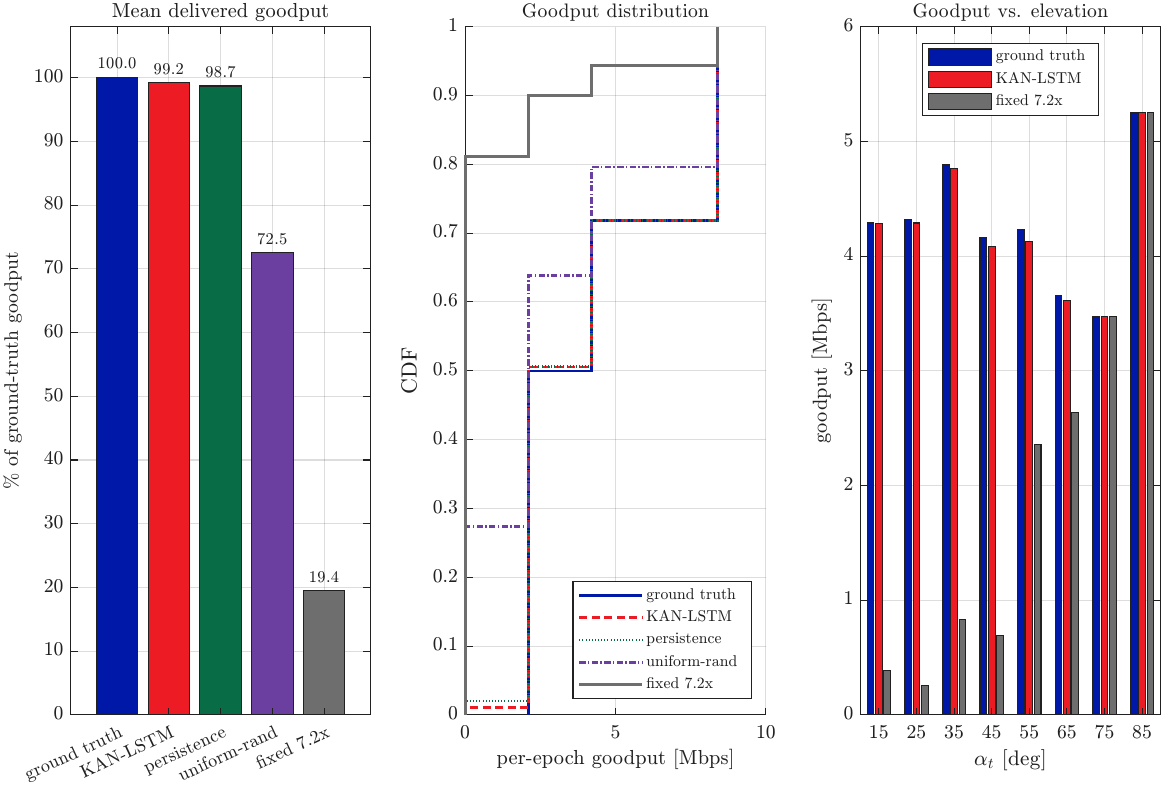}
	\caption{\small Delivered goodput over valid test epochs: mean relative to the ground-truth ceiling (left), per-epoch CDF (center), and versus elevation (right).}
	\label{fig:goodput}
\end{figure}

\begin{figure}[t]
	\centering
	\includegraphics[width=0.95\columnwidth]{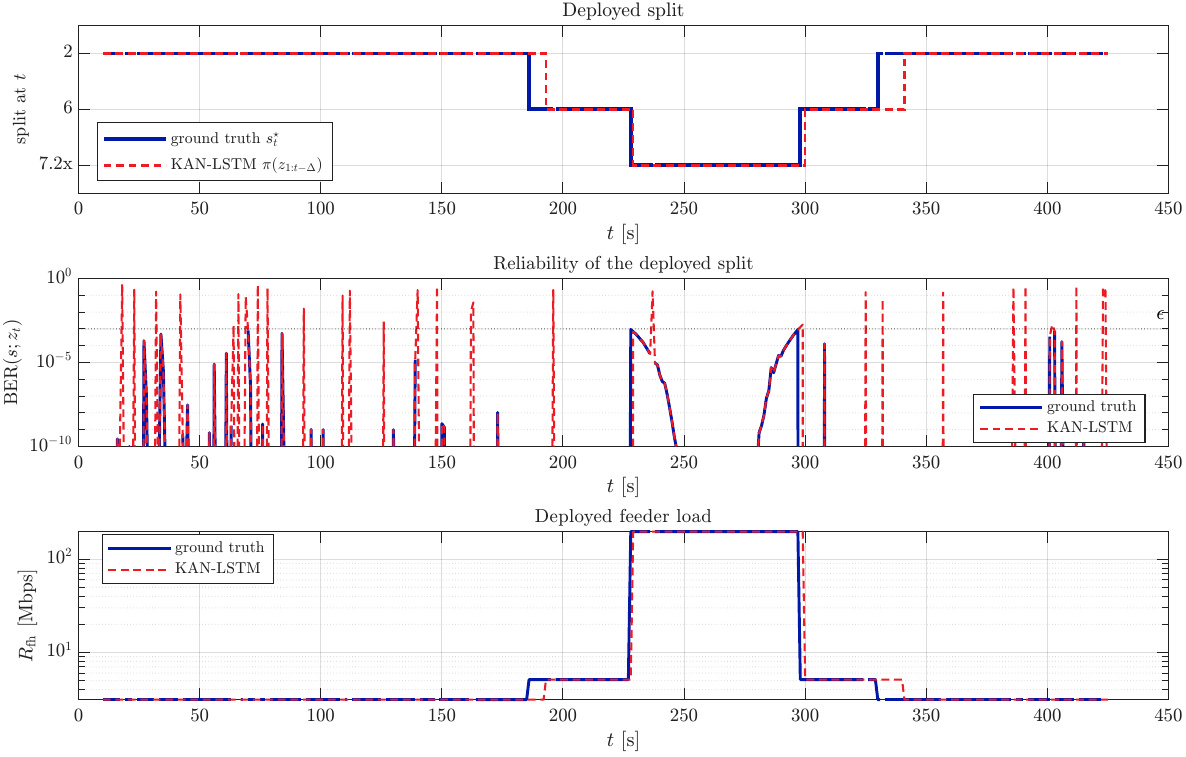}
	\caption{\small Ground-truth policy and KAN-LSTM on the same $\vec{z}_t$ sequence of a held-out pass: split, uncoded BER, and feeder load ($\Delta=10$~s ahead).}
	\label{fig:compare-time}
\end{figure}

\subsection{Exact-Solution Validation and Baseline Comparison}
Proposition~\ref{prop:ber-aging} is verified first. Over $318{,}600$ cases ($106{,}200$ states, three splits), the equivalence $\mathrm{BER}(s;\vec{z}_t)\le\epsilon\iff\Gamma_t\ge\gamma_s^{\mathrm{th}}$ of \eqref{eq:aging-threshold} holds without a single mismatch, no sample falls below its floor $y_s^{\mathrm{floor}}$, and wherever $u_{\mathrm{th}}\ge u_s^{\max}$ the target is unmet at any power. The floor separates the splits: the feeder-induced CSI age of 7.2x leaves $y_{s_1}^{\mathrm{floor}}$ far above $\epsilon$, whereas regenerating the PHY loop onboard makes the floor negligible. The closed-form rule of Remark~\ref{rem:region-rule} reproduces enumeration of \eqref{eq:ladder-rule} exactly, the label geometry being the tiling \eqref{eq:region-rule} predicts: reliability boundary vertical in $\Gamma_t$, aging boundary horizontal in the Doppler--delay product. Against the ground-truth policy, a uniform-random split frequently violates the constraints, while a random-feasible split never violates but often forfeits the centralization preference, its \emph{negative} fronthaul gap, a lighter load bought by needlessly relocating processing onboard, showing that the rate metric must be read jointly with the reliability ones. With the split-independent access rate $R_{\mathrm{acc}}(B_t)=R_c N_{\mathrm{RE}}^{\mathrm{act}}/T_{\mathrm{slot}}$, the delivered goodput $G(\pi)=\overline{R_{\mathrm{acc}}(B_t)\,\mathbf{1}[\mathrm{BER}(\hat s_t;\vec{z}_t)\le\epsilon]}$ recasts reliability as rate: the ground-truth policy's guarantee makes its goodput the outage-free ceiling, which a uniform-random split attains only partially and a fixed always-centralized 7.2x split barely at all.

\subsection{KAN-LSTM Schedule Prediction}
The KAN-LSTM (hidden width 16, $k=3$) is trained for 40 epochs with Adam at horizon $\Delta=10$~s, covering $T_{\mathrm{rc}}$ ($2\tau_f\approx13$~ms being negligible); the loss converges within a few epochs and validation accuracy plateaus without overfitting.

Table~\ref{tab:kanlstm-metrics} compares against the persistence baseline (hold the current ground-truth label for $\Delta$ seconds): the KAN-LSTM is more accurate, roughly halves the BER violations, and improves recall most on the minority centralized classes (11/26/63\% of valid test epochs). Fig.~\ref{fig:goodput} places all policies on a single delivered-rate axis: both predictors recover nearly all of the ground-truth policy's outage-free goodput ceiling, the KAN-LSTM slightly more, whereas a uniform-random split loses a substantial share and a fixed always-centralized 7.2x split collapses, most severely at low elevation where the feeder delay, and hence the CSI age, is largest. At the event level, about half of the switching events are predicted within the matching window; the missed ones are fading-driven micro-switches, and the schedule error concentrates within a few seconds of a ground-truth switch and nearly vanishes away from it, i.e., timing jitter at region-boundary crossings. Fig.~\ref{fig:compare-time} shows a held-out pass: the predicted schedule anticipates the staircase $2\rightarrow6\rightarrow7.2\mathrm{x}\rightarrow6\rightarrow2$ with second-scale offsets.

\begin{table}[t]
\caption{\small KAN-LSTM versus the persistence baseline.}
\label{tab:kanlstm-metrics}
\centering
\scriptsize
\setlength{\tabcolsep}{2.5pt}
\begin{tabular}{@{}lccccc@{}}
\toprule
Policy & Acc.\ [\%] & $V_{\mathrm{BER}}$ [\%] & $\Delta\bar{R}_{\mathrm{fh}}$ [Mbps] & \makecell{Recall\\7.2x/6/2 [\%]} & \makecell{Good.\\{}[\%]} \\
\midrule
KAN-LSTM     & 96.81 & 0.99 & $+2.76$ & 99.1 / 94.4 / 97.4 & 99.2 \\
Persistence  & 90.75 & 2.02 & $-0.12$ & 84.6 / 83.7 / 95.6 & 98.7 \\
\bottomrule
\end{tabular}
\end{table}

Fig.~\ref{fig:wireless-perf} translates the policies into deployed wireless performance: the uncoded-BER distributions coincide in the body, the ground-truth policy's CCDF truncating exactly at $\epsilon$ by construction, the KAN-LSTM showing only a thin tail and persistence a heavier one; the post-aging SNR distributions and feeder loads nearly coincide, so the learned policy differs from the ground-truth policy only in the reliability tail. Single-epoch inference takes microseconds, negligible at 1~Hz.

\begin{figure}[t]
\centering
\includegraphics[width=0.95\columnwidth]{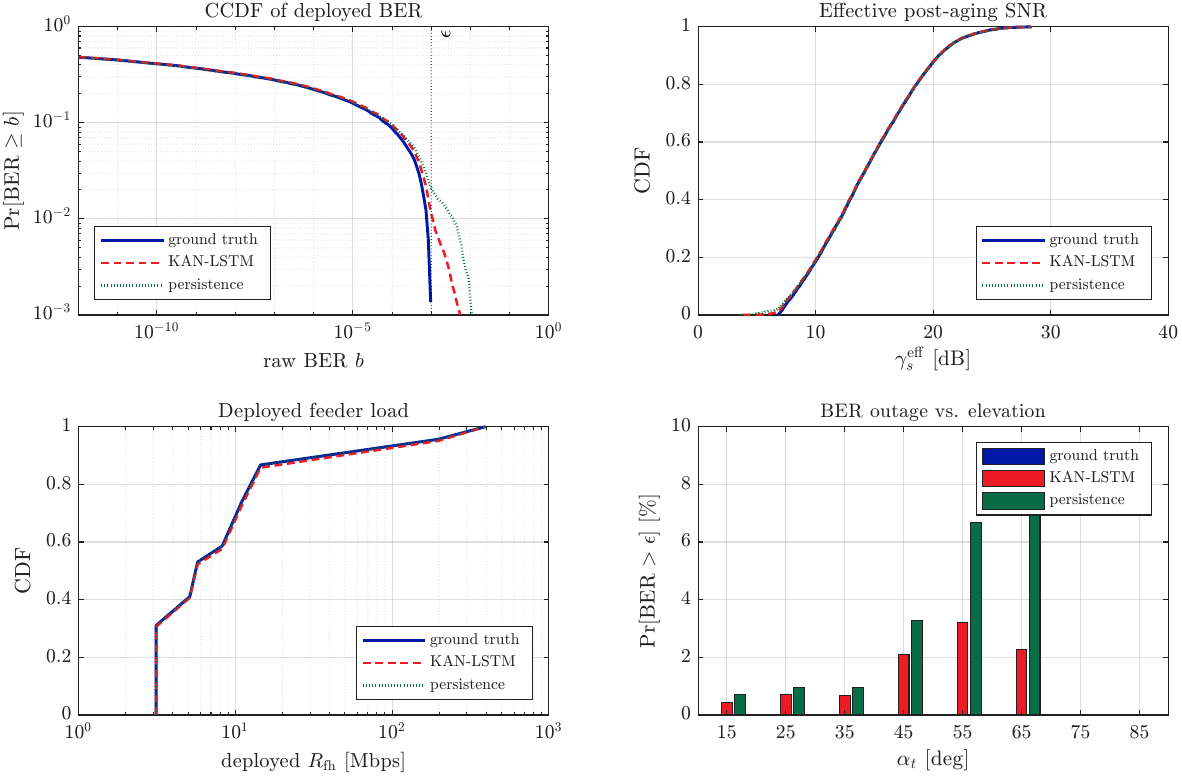}
\caption{\small Deployed wireless performance: CCDF of uncoded BER, CDFs of post-aging SNR and feeder load, and BER outage versus elevation.}
\label{fig:wireless-perf}
\end{figure}

\section{Conclusion}
\label{sec:conclusion}
In this paper, we proposed a layered framework for functional-split selection in an O-RAN NTN, in which each split places the RAN at a distinct point of the transparent-to-regenerative axis. An exact link-plane geometry and a split-dependent CSI-aging mechanism make reliability a provable consequence of the placement rather than an exogenous penalty, and the resulting prefer-centralized feasibility problem admits a closed-form region rule as its exact solution, in full agreement with exhaustive enumeration. Because this rule is reactive, a compact KAN-LSTM anticipates its schedule seconds ahead, recovering nearly all of the ground-truth goodput and outperforming naive and reactive baselines.


\bibliographystyle{IEEEtran}
\bibliography{references}

@IEEEtranBSTCTL{IEEEtran:BSTcontrol,
  CTLuse_forced_etal       = "yes",
  CTLmax_names_forced_etal = "5",
  CTLnames_show_etal       = "1"
}

@ARTICLE{Jamshed_MCOM_2026,
	author={Jamshed, Muhammad Ali and Saad, Malik Muhammad and Mohsin, Muhammad Ahmed and Kim, Dongkyun and Dobre, Octavia A. and Yanikomeroglu, Halim and Mohjazi, Lina},
	journal={{IEEE} Communications Magazine},
	title={Net-Zero {6G} from {E}arth to Orbit: Sustainable Design of Integrated Terrestrial and Non-Terrestrial Networks},
	year={2026},
	volume={},
	number={},
	pages={1-10},
	doi={10.1109/MCOM.001.2500625},
	ISSN={1558-1896},
	month={},}

@article{Polese_COMST_2023,
  author  = {Polese, Michele and Bonati, Leonardo and D'Oro, Salvatore and Basagni, Stefano and Melodia, Tommaso},
  title   = {Understanding {O-RAN}: Architecture, Interfaces, Algorithms, Security, and Research Challenges},
  journal = {{IEEE} Communications Surveys \& Tutorials},
  volume  = {25},
  number  = {2},
  pages   = {1376--1411},
  year    = {2023},
  eprint  = {2202.01032},
  eprinttype = {arXiv},
  doi     = {10.1109/COMST.2023.3239220},
  arxiv   = {https://arxiv.org/abs/2202.01032}
}

@ARTICLE{Alam_COMST_2026,
  author={Alam, Khurshid and Habibi, Mohammad Asif and Tammen, Matthias and Krummacker, Dennis and Saad, Walid and Renzo, Marco Di and Melodia, Tommaso and Costa-Pérez, Xavier and Debbah, Mérouane and Dutta, Ashutosh and Schotten, Hans D.},
  journal={{IEEE} Communications Surveys \& Tutorials},
  title={A Comprehensive Tutorial and Survey of {O-RAN}: Exploring Slicing-Aware Architecture, Deployment Options, Use Cases, and Challenges},
  year={2026},
  volume={28},
  number={},
  pages={1637-1678},
  doi={10.1109/COMST.2025.3598406},
  ISSN={1553-877X},
  month={},}

@ARTICLE{Lin2026_arXiv,
	title         = "The emerging {3GPP} {6G} {RAN} architecture: An overview",
	author        = "Lin, Xingqin",
	journal       = "arXiv preprint",
	month         =  jul,
	year          =  2026,
	archivePrefix = "arXiv",
	primaryClass  = "cs.NI",
	eprint        = "2607.10543",
  doi     = {10.48550/arXiv.2607.10543},
  url     = {https://doi.org/10.48550/arXiv.2607.10543},
}

@INPROCEEDINGS{Laouadi_ICC_2025,
  author    = {Laouadi, Abir and Langar, Rami},
  title     = {Energy and {QoS}-Aware Functional Split Selection in {5G} and Beyond {O-RAN} Networks},
  booktitle = {{IEEE} International Conference on Communications ({ICC})},
  pages     = {3594--3599},
  year      = {2025},
  doi       = {10.1109/ICC52391.2025.11161774},
}

@article{Murti_TWC_2022,
  author  = {Murti, Fahri Wisnu and Ali, Samad and Latva-aho, Matti},
  title   = {Constrained Deep Reinforcement Based Functional Split Optimization in Virtualized {RAN}s},
  journal = {{IEEE} Transactions on Wireless Communications},
  volume  = {21},
  number  = {11},
  pages   = {9850--9864},
  year    = {2022},
  doi     = {10.1109/TWC.2022.3179811},
}

@ARTICLE{Carbonara_OJCOMS_2025,
  author={Carbonara, Salvatore and Olivieri, Marco and Rago, Arcangela and Sciancalepore, Vincenzo and Piro, Giuseppe and Boggia, Gennaro and Grieco, Luigi Alfredo},
  journal={{IEEE} Open Journal of the Communications Society}, 
  title={Hands-On Solutions for Testing Integrated Terrestrial and Non-Terrestrial Networks: A Comprehensive Survey}, 
  year={2025},
  volume={6},
  number={},
  pages={10729-10760},
  doi={10.1109/OJCOMS.2025.3646364},
  ISSN={2644-125X},
  month={},}

@INPROCEEDINGS{Do_RIVF_2025,
	author={Do, Tri Nhu},
	booktitle={2025 {RIVF} International Conference on Computing and Communication Technologies ({RIVF})}, 
	title={{AI}-{Open}-{RAN} for Non-Terrestrial Networks}, 
	year={2025},
	volume={},
	number={},
	pages={908-913},
	doi={10.1109/RIVF68649.2025.11365028},
	ISSN={2473-0130},
	month={Dec},}

@inproceedings{
liu2025kan,
title={{KAN}: Kolmogorov{\textendash}{A}rnold Networks},
author={Ziming Liu and Yixuan Wang and Sachin Vaidya and Fabian Ruehle and James Halverson and Marin Soljacic and Thomas Y. Hou and Max Tegmark},
booktitle={The Thirteenth International Conference on Learning Representations},
year={2025},
doi={10.48550/arXiv.2404.19756},
}

@INPROCEEDINGS{Mahyoub_IWCMC_2025,
  author={Mahyoub, Mohammed and Muhaidat, Sami and Yanikomeroglu, Halim and Kurt, Gunes Karabulut},
  booktitle={2025 International Wireless Communications and Mobile Computing ({IWCMC})},
  title={Risk-Aware Slicing-Based Security Functions Allocation in {LEO} Satellite Networks},
  year={2025},
  volume={},
  number={},
  pages={728-733},
  doi={10.1109/IWCMC65282.2025.11059478},
  ISSN={2376-6506},
  month={May},}

\end{document}